\newcolumntype{L}{>{$}l<{$}} 
\newcolumntype{C}{>{$}c<{$}} 
\newcolumntype{R}{>{$}r<{$}} 
\DeclareMathOperator*{\argmin}{arg\, min}
\newcommand{\F}{\mathbb{F}}
\newcommand{\N}{\mathbb{N}}
\newcommand{\R}{\mathbb{R}}
\newcommand{\C}{\mathbb{C}}
\renewcommand{\tilde}{\widetilde}
\newenvironment{psmallmatrix}
  {\left(\begin{smallmatrix}}
  {\end{smallmatrix}\right)}
\theoremstyle{plain}
\newtheorem{theorem}{Theorem}
\theoremstyle{definition}
\newtheorem*{example}{Example}
\theoremstyle{remark}
\newtheorem*{remark}{Remark}
\title{Secure Distributed Gram Matrix Multiplication}
\author{
\IEEEauthorblockN{Okko Makkonen,~\IEEEmembership{Graduate Student Member,~IEEE} and Camilla Hollanti,~\IEEEmembership{Member,~IEEE}}

\IEEEauthorblockA{
    Department of Mathematics and Systems Analysis \\
    Aalto University, Finland \\
    \texttt{\{okko.makkonen, camilla.hollanti\}@aalto.fi}
}
}
\begin{document}

\maketitle

\begin{abstract}
The \emph{Gram matrix} of a matrix $A$ is defined as $AA^T$ (or $A^T\!A$). Computing the Gram matrix is an important operation in many applications, such as linear regression with the least squares method, where the explicit solution formula includes the Gram matrix of the data matrix. Secure distributed matrix multiplication (SDMM) can be used to compute the product of two matrices using the help of worker servers. If a Gram matrix were computed using SDMM, the data matrix would need to be encoded twice, which causes an unnecessary overhead in the communication cost. We propose a new scheme for this purpose called secure distributed Gram matrix multiplication (SDGMM). It can leverage the advantages of computing a Gram matrix instead of a regular matrix product.
\end{abstract}

\section{Introduction}

Secure distributed matrix multiplication (SDMM) has been used to compute the product of two matrices using the help of worker servers, while keeping the matrices information-theoretically secure against colluding workers. SDMM was first introduced by Chang and Tandon in \cite{chang2018capacity}. Many different schemes have since been presented, such as the secure MatDot scheme in \cite{aliasgari2019distributed}, the GASP scheme in \cite{d2020gasp} and the DFT scheme in \cite{mital2022secure}. These schemes compute the products over finite fields, but computations over complex numbers have also been considered in \cite{makkonen2022analog, soleymani2020privacy, soleymani2021analog}. Other secure coded computation tasks have also been considered, such as batch matrix multiplication in \cite{jia2021capacity, li2022private}, arbitrary polynomial evaluation using Lagrange coded computation in \cite{yu2019lagrange}, and convolution in \cite{yang2019secure}.

In this paper we will consider the computation of the Gram matrix $AA^T$ for some matrix $A$. In general, the Gram matrix is defined using some inner product. We want to consider this multiplication also over finite fields, where the concept of inner product is not meaningful. Hence, we only consider the product $AA^T$. Such a computation is an important step in many applications. For example, if $A$ is a $t \times s$ matrix representing centered data of $t$ variables and $s$ observations, then an estimate of the covariance matrix can be computed by $\frac{1}{s - 1} AA^T$. Another example is linear regression using the least squares method, where the aim is to solve the following minimization problem
\begin{equation*}
    \hat{\beta} = \argmin_{\beta \in \R^t} \lVert A^T\beta - b \rVert_2
\end{equation*}
for some given data matrix $A \in \R^{t \times s}$ and a vector $b \in \R^s$. If the rows of $A$ are linearly independent, then the solution obtains the explicit form
\begin{equation*}
    \hat{\beta} = (AA^T)^{-1}Ab.
\end{equation*}
Here, computing the Gram matrix of $A$ is only one step in the whole computation. However, if $A$ is a $t \times s$ matrix, where $t \ll s$, then inverting the $t \times t$ matrix $AA^T$ is less computationally heavy than computing the Gram matrix $AA^T$.

The Gram matrix of the matrix $A$ can be computed more efficiently compared to regular matrix multiplication of the matrices $A$ and $A^T$. The Gram matrix is symmetric, which means that only the lower (or upper) triangular part needs to be computed and stored as the other half can be obtained by symmetry. For more efficient algorithms on Gram matrix multiplication, see \cite{dumas2020fast}.

A Gram matrix can easily be computed using SDMM by considering the matrix multiplication of $A$ and $A^T$. However, this would not be particularly efficient, since it would involve encoding the matrix $A$ essentially twice. As communication cost is an important consideration, it is valuable to find even small decreases in the upload cost. Additionally, the multiplication performed by the workers can not utilize the benefits of Gram matrix multiplication, since the results will not be symmetric in general. This leads us to study some variants of standard SDMM schemes, which can be utilized to compute the Gram matrix efficiently. The idea is to encode the matrix $A$ just once, such that the computational task of the workers consists of computing the Gram matrix of the encoded pieces. This approach requires a nontrivial choice of the linear code, so that the product can be obtained from these partial results.

Our proposed methods will be based on degree tables, which were first introduced in \cite{d2021degree}. However, we consider symmetric degree tables, which have not been studied previously to our knowledge. The Gram matrix multiplication has also been considered in \cite{yu2019lagrange} with the Lagrange coded computation scheme, but the scheme was not optimized for this specific problem. We also present an analog version of the scheme, which has much lower communication cost, but is not able to provide security.

\section{Preliminaries}

We write $[n] = \{1, \dots, n\}$ and $\N = \{0, 1, 2, \dots\}$. Unless mentioned otherwise, we consider matrices and scalars over a finite field $\F_q$ with $q$ elements. The group of units of $\F_q$ is denoted by $\F_q^\times = \F_q \setminus \{0\}$. Vectors in $\F_q^n$ are considered to be row vectors. If $G$ is a matrix, then $G^{\leq m}$ and $G^{> m}$ denote the submatrices with the first $m$ rows and the the rest of the rows, respectively. Furthermore, if $\mathcal{I}$ is a set of indices, then $G_\mathcal{I}$ is the submatrix of $G$ with the columns indexed by $\mathcal{I}$.

The key tool in distributed coded computation is partitioning the task into smaller subtasks. For matrices this is achieved by partitioning the matrix into equal sized block matrices. One common partitioning is the \emph{inner product partitioning (IPP)}, where the matrices $A$ and $B$ are partitioned such that
\begin{equation*}
    A = \begin{pmatrix} A_1 & \dots&  A_p \end{pmatrix} \quad \text{and} \quad B = \begin{psmallmatrix} B_1 \\ \vdots \\ B_p \end{psmallmatrix}.
\end{equation*}
Then their product can be expressed as $AB = \sum_{j=1}^p A_jB_j$. Another partitioning is the \emph{outer product partitioning (OPP)}, where the matrices $A$ and $B$ are partitioned such that
\begin{equation*}
    A = \begin{psmallmatrix} A_1 \\ \vdots \\ A_m \end{psmallmatrix} \quad \text{and} \quad B = \begin{pmatrix} B_1 & \dots & B_n \end{pmatrix}.
\end{equation*}
Then their product can be expressed as
\begin{equation*}
    AB = \begin{psmallmatrix}
    A_1B_1 & \dots & A_1B_n \\
    \vdots & \ddots & \vdots \\
    A_mB_1 & \dots & A_mB_n
    \end{psmallmatrix}.
\end{equation*}

The aim of distributed matrix multiplication is to partition the computation into smaller subcomputations, which can be distributed to the worker servers. Secure distributed matrix multiplication puts an additional constraint, which states that no information about the original matrices should be leaked during this process. Most SDMM schemes in the literature follow some kind of linear structure, where the encoded matrices are obtained as linear combinations of the matrices and some random noise that is added. This structure has been presented as a general framework called linear SDMM in \cite{makkonen2022general}. We shall present our contributions using the help of this framework.

A linear SDMM model with $N$ workers and $X$ colluding workers follows the steps below.
\begin{itemize}
    \item Partitioning the matrices $A$ and $B$ into $m$ and $n$ equal sized pieces, respectively.
    \item Computing the encodings
    \begin{align*}
        (\tilde{A}_1, \dots, \tilde{A}_N) &= (A_1, \dots, A_m, R_1, \dots, R_X)F, \\
        (\tilde{B}_1, \dots, \tilde{B}_N) &= (B_1, \dots, B_n, S_1, \dots, S_X)G,
    \end{align*}
    where the matrices $R_1, \dots, R_X$ and $S_1, \dots, S_X$ are chosen uniformly at random. Here $F, G$ are generator matrices of some linear codes.
    \item Worker $i$ computes the product $\tilde{C}_i = \tilde{A}_i\tilde{B}_i$ and returns it to the user.
    \item The user recovers the product $AB$ by computing some linear combinations of the responses $\tilde{C}_i$.
\end{itemize}
The \emph{recovery threshold} $R$ of an SDMM scheme is the minimal number of responses the user needs to perform the decoding in the worst case. The difference $N - R$ describes the maximal number of unresponsive workers that can be tolerated by the scheme.

An SDMM scheme is secure against $X$-collusion if
\begin{equation*}
    I(\bm{A}, \bm{B}; \tilde{\bm{A}}_\mathcal{X}, \tilde{\bm{B}}_\mathcal{X}) = 0
\end{equation*}
for all $\mathcal{X} \subset [N]$, $\lvert \mathcal{X} \rvert = X$. In \cite{makkonen2022general} it was shown that a linear SDMM scheme is secure against $X$-collusion if $F^{>m}$ and $G^{>n}$ generate MDS codes.

One way to construct a linear SDMM scheme is by considering the generator matrices $F$ and $G$ as generalized Vandermonde matrices with distinct evaluation points $\alpha \in \F_q^N$, and exponents $\varphi \in \N^{m + X}$ and $\gamma \in \N^{n + X}$, respectively. This choice of generator matrices means that the encoded pieces are evaluations of the polynomials
\begin{align*}
    f(x) &= \sum_{j=1}^m A_j x^{\varphi_j} + \sum_{k=1}^X R_k x^{\varphi_{m + k}},~\text{and}\\
    g(x) &= \sum_{j'=1}^n B_{j'} x^{\gamma_{j'}} + \sum_{k'=1}^X S_{k'} x^{\gamma_{n + k'}}
\end{align*}
at the points $\alpha_1, \dots, \alpha_N$. Then the products $\tilde{C}_i = \tilde{A}_i\tilde{B}_i$ will be evaluations of
\begin{align*}
    h(x) &= f(x)g(x) \\
    &= \sum_{j=1}^m \sum_{j'=1}^n A_jB_{j'} x^{\varphi_j + \gamma_{j'}} + \sum_{j=1}^m \sum_{k'=1}^X A_jS_{k'} x^{\varphi_j + \gamma_{n + k'}} \\
    &+ \sum_{k=1}^X \sum_{j'=1}^n R_kB_{j'} x^{\varphi_{m + k} + \gamma_{j'}} + \sum_{k=1}^X \sum_{k'=1}^X R_kS_{k'} x^{\varphi_{m + k} + \gamma_{n + k'}}.
\end{align*}
The decoding is usually thought of as obtaining the coefficients of $h(x)$ from the evaluations $\tilde{C}_i$. This leads us to consider the exponents of $h(x)$, which are obtained as the sum of the $\varphi$'s and $\gamma$'s. The table that is obtained as the outer sum of $\varphi$ and $\gamma$ is called the \emph{degree table} (denoted by $\varphi \oplus \gamma$), which was first introduced in \cite{d2021degree}. The general form of a degree table can be found in Table \ref{tab:degree_table}.

\begin{table*}[t]
    \centering
    \begin{tabular}{L|LLL|LLL}
        & \gamma_1 & \dots & \gamma_n & \gamma_{n + 1} & \dots &  \gamma_{n + X} \\ \hline
        \varphi_1 & \varphi_1 + \gamma_1 & \dots & \varphi_1 + \gamma_n & \varphi_1 + \gamma_{n + 1} & \dots & \varphi_1 + \gamma_{n + X} \\
        \vdots & \vdots & \ddots & \vdots & \vdots & \ddots & \vdots \\
        \varphi_m & \varphi_m + \gamma_1 & \dots & \varphi_m + \gamma_n & \varphi_m + \gamma_{n + 1} & \dots & \varphi_m + \gamma_{n + X} \\ \hline
        \varphi_{m + 1} & \varphi_{m + 1} + \gamma_1 & \dots & \varphi_{m + 1} + \gamma_n & \varphi_{m + 1} + \gamma_{n + 1} & \dots & \varphi_{m + 1} + \gamma_{n + X} \\
        \vdots & \vdots & \ddots & \vdots & \vdots & \ddots & \vdots \\
        \varphi_{m + X} & \varphi_{m + X} + \gamma_1 & \dots & \varphi_{m + X} + \gamma_n & \varphi_{m + X} + \gamma_{n + 1} & \dots & \varphi_{m + X} + \gamma_{n + X}
    \end{tabular}
    \vspace{0.5em}
    \caption{General degree table for arbitrary $\varphi$ and $\gamma$.}
    \label{tab:degree_table}
\end{table*}

To be able to decode the product $AB$ from the coefficients of $h(x)$ we must have that the useful products $A_jB_{j'}$ are not disturbed by the interference terms such as $R_kB_{j'}$, $A_jS_{k'}$ or $R_kS_{k'}$. This is achieved by making sure that their corresponding exponents are distinct of the exponents of the useful terms.

If the matrices are partitioned using the inner product partitioning, then it is desirable to combine the subproducts $A_jB_{j'}$ as a sum when $j = j'$. On the other hand, if the matrices are partitioned with the outer product partitioning, then the subproducts $A_jB_{j'}$ need to be kept separate to be able to decode each of the terms individually.

The recovery threshold of such an SDMM scheme describes the smallest number of responses that are needed to interpolate the polynomial $h(x)$. This is obviously possible if the number of responses is at least $\deg(h(x)) + 1$, since a polynomial of degree $k$ can be interpolated from $k+1$ evaluations. However, by choosing the evaluation points carefully, it might be possible to interpolate the polynomial with fewer evaluations. The polynomial $h(x)$ will not necessarily have all terms, since the exponents $\varphi_j + \gamma_{j'}$ might have gaps in them. This means that the number of responses needed can be as low as the number of distinct elements in the degree table. This approach was taken in \cite{d2020gasp} and \cite{d2021degree} with the GASP codes.

Let us consider two examples of SDMM schemes that come from generalized Vandermonde matrices. Let us fix the parameters $p = m = n = 4$ and $X = 1$.

\newpage
\begin{example}[Secure MatDot]
The secure MatDot scheme was first introduced in \cite{aliasgari2019distributed}. The matrices are partitioned using the inner product partitioning and the exponents are chosen as
\begin{equation*}
    \varphi = (0, 1, 2, 3, 4), \quad \gamma = (3, 2, 1, 0, 4).
\end{equation*}
Then the degree table is the following with the useful terms highlighted.
\begin{center}
\begin{tabular}{c|cccccc}
      & 3 & 2 & 1 & 0 & 4 \\ \hline
    0 & \textbf{3} & 2 & 1 & 0 & 4 \\
    1 & 4 & \textbf{3} & 2 & 1 & 5 \\
    2 & 5 & 4 & \textbf{3} & 2 & 6 \\
    3 & 6 & 5 & 4 & \textbf{3} & 7 \\
    4 & 7 & 6 & 5 & 4 & 8
\end{tabular}
\end{center}

We can see that the first $p = 4$ diagonals are all equal to 3, which means that the coefficient of $x^3$ in $h(x)$ will contain the sum of the terms $A_jB_j$, which is exactly the product $AB$.
\end{example}

\begin{example}[GASP]
The GASP code was first introduced in \cite{d2020gasp} and was motivated by studying the combinatorics of the degree table. The matrices are partitioned using the outer product partitioning and the exponents are chosen as
\begin{equation*}
    \varphi = (0, 1, 2, 3, 16), \quad \gamma = (0, 4, 8, 12, 16).
\end{equation*}
Then the degree table is the following with the useful terms highlighted.
\begin{center}
\begin{tabular}{c|ccccc}
      & 0 & 4 & 8 & 12 & 16 \\ \hline
    0 & \textbf{0} & \textbf{4} & \textbf{8} & \textbf{12} & 16 \\
    1 & \textbf{1} & \textbf{5} & \textbf{9} & \textbf{13} & 17 \\
    2 & \textbf{2} & \textbf{6} & \textbf{10} & \textbf{14} & 18 \\
    3 & \textbf{3} & \textbf{7} & \textbf{11} & \textbf{15} & 19 \\
    16 & 16 & 20 & 24 & 28 & 32
\end{tabular}
\end{center}

The elements in the upper left corner are distinct, which means that the elements $A_jB_{j'}$ can be obtained as the coefficients of $\{1, x, \dots, x^{15} \}$ in $h(x)$.
\end{example}

\newpage
\section{Secure Distributed Gram Matrix Multiplication}

When applying the traditional SDMM methods to compute $AA^T$, we need to encode the matrix $A$ essentially twice. Once as $A$ and the second time as $A^T$. These encodings will be independent of each other, since they use independent random matrices for security and different linear codes. Furthermore, the computatation performed by the workers can not utilize any of the benefits of computing a Gram matrix, since the two encoded matrices are essentially unrelated.

In typical applications, the size of $AA^T$ is significantly smaller than the size of $A$. For example, in the least squares method $AA^T$ is $t \times t$ and $A$ is $t \times s$, where $t$ and $s$ are the number of variables and samples, respectively. The number of variables is usually significantly smaller than the number of samples, so $t^2 \ll ts$. Thus, we are interested in minimizing the upload cost of our scheme, while the download cost is negligible. We consider an SDMM scheme where both encodings are done using the same linear code, since this would only require one encoding of the matrix $A$, which will reduce the upload cost. The workers would then compute $\tilde{A}_i \tilde{A}_i^T$. Furthermore, it is enough that the workers compute and return the lower (or upper) triangular part, since the result will be symmetric. We call such a scheme a \emph{secure distributed Gram matrix multiplication (SDGMM)} scheme.

The security of an SDGMM scheme is analoguous to the SDMM schemes, \emph{i.e.}, an SDGMM scheme is secure against $X$-collusion if
\begin{equation*}
    I(\bm{A}; \tilde{\bm{A}}_\mathcal{X}) = 0
\end{equation*}
for all $\mathcal{X} \subset [N]$, $\lvert \mathcal{X} \rvert = X$. In this paper we will only be interested in the case when $X = 1$, which means that the scheme is secure, but does not provide protection against collusion. Similar schemes can be done for $X > 1$, but these require more elaborate constructions, since the security condition is more complicated. The following example will show how such a scheme could work.

\begin{example}
Partition the matrix $A$ horizontally to $p = 4$ equal sized pieces such that
\begin{equation*}
    A = \begin{pmatrix} A_1 & A_2 & A_3 & A_4 \end{pmatrix}.
\end{equation*}
Then $AA^T = \sum_{j=1}^4 A_j A_j^T$. Choose $R_1$ uniformly at random and define the polynomial
\begin{equation*}
    f(x) = A_1 + A_2x + A_3x^3 + A_4x^7 + R_1x^8.
\end{equation*}
Let $\alpha_1, \dots, \alpha_N \in \F_q^\times$ be distinct nonzero elements. Then each worker $i \in [N]$ receives the encoded matrix $\tilde{A}_i = f(\alpha_i)$ and computes $\tilde{A}_i \tilde{A}_i^T$. These responses are then evaluations of
\begin{align*}
    f(x) f(x)^T &= A_1 A_1^T + A_2 A_2^T x^2 \\
    &+ A_3 A_3^T x^6 + A_4 A_4^T x^{14} + (\text{other terms}).
\end{align*}
The degree of $f(x)$ is 16, so 17 responses are enough to interpolate the polynomial. Therefore, $N \geq 17$ is enough workers. The Gram matrix $AA^T$ can then be obtained by computing the sum of the coefficients of $1$, $x^2$, $x^6$ and~$x^{14}$.
\end{example}

If we were to compute the product $AA^T$ using the secure MatDot scheme with $p=4$ and $X=1$, we would require $N \geq 9$ workers. Thus, we would have $2 \cdot 9 = 18$ encoded matrices. Hence, the above example saves a small amount in the upload phase, since it only requires $17$ encoded matrices. On the other hand, if the evaluation points were chosen carefully, we would only need $N \geq 14$, since the polynomial $f(x)f(x)^T$ has just 14 nonzero terms.

\subsection{SDGMM Schemes Coming from a Degree Table}

Let us again consider the generator matrix as a generalized Vandermonde matrix with exponents $\varphi \in \N^{m + X}$. Without loss of generality, we assume that the exponents are ordered from smallest to largest. The first problem we encounter is that the degree table will be symmetric. This means that the requirements of an outer product partitioning scheme will be impossible to satisfy, since we can not distinguish the terms $A_j A_{j'}^T$ and $A_{j'} A_j^T$ for $j \neq j'$. On the other hand, the elements on the diagonal will all be distinct, since they are equal to $2\varphi_j$. This means that we can not recover the sum of the elements $A_jB_j$ as one of the coefficients of the resulting polynomial $h(x)$. However, we can still compute the sum as a linear combination of the responses.

The degree table associated with the above example is the following, where the useful terms have been highlighted.
\begin{center}
\begin{tabular}{c|ccccc}
      & 0 & 1 & 3 & 7 & 8 \\ \hline
    0 & \textbf{0} & 1 & 3 & 7 & 8 \\
    1 & 1 & \textbf{2} & 4 & 8 & 9 \\
    3 & 3 & 4 & \textbf{6} & 10 & 11 \\
    7 & 7 & 8 & 10 & \textbf{14} & 15 \\
    8 & 8 & 9 & 11 & 15 & 16
\end{tabular}
\end{center}

To construct a scheme for arbitrary $p$ and $X = 1$, we will choose the exponents $\varphi$ carefully to minimize the recovery threshold. We shall use the same partitioning as the above example, \emph{i.e.}, partitioning $A$ into $p$ pieces horizontally. We denote the elements in the degree table as
\begin{equation*}
    \mathcal{H} = \{ \varphi_i + \varphi_j \mid i, j \in [p + 1] \}.
\end{equation*}
We require that the diagonal elements $2\varphi_j$ for $j \in [p]$ are distinct from all other elements in the degree table. We say that $\varphi$ is \emph{valid} if it satisfies this property. This means that we can decode the products $A_j A_j^T$ from the responses. To do this we need at least $\lvert \mathcal{H} \rvert$ evaluations, since we have a system of linear equations with $\lvert \mathcal{H} \rvert$ unknowns. This gives us two options: minimize the largest value in the table (\emph{i.e.} $\varphi_{p+1}$), or minimize the number of distinct elements in the degree table (\emph{i.e.} $\lvert \mathcal{H} \rvert$).

\subsection{Minimizing the Number of Elements}

A trivial way to choose $\varphi \in \N^{p + X}$ is by setting $\varphi_1 = 0$ and $\varphi_{j+1} = 2\varphi_j + 1$. This will yield us with the largest element $\varphi_{p+X} = 2^{p+X} - 1$, which grows fast as $p$ grows. On the other hand, we notice that the number of elements in the degree table is $\frac{(p+X)(p+X+1)}{2}$. This means that the recovery threshold is essentially quadratic in $p$.

\emph{SDGMM scheme I:} Another way to choose $\varphi$ is by iteratively doubling the length. Let us assume that the target length $p + 1 = 2^n$. For $n = 0$, we can choose $\varphi^0 = (0)$. For $\varphi^{n+1}$ let us choose the concatenation of $\varphi^n$ and $\varphi^n + (2M + 1)\mathds{1}$, where $\mathds{1}$ is the all ones vector and $M$ is the largest element of $\varphi$. 

The degree table $\varphi^n \oplus \varphi^n$ will have largest element $2M$. Therefore,
\begin{equation*}
    \varphi^n \oplus (\varphi^n + (2M + 1)\mathds{1}) = \varphi^n \oplus \varphi^n + (2M + 1)\mathds{1},
\end{equation*}
where we also denote the all ones table with $\mathds{1}$, will have all elements at least $2M + 1$ and at most $2M + 1 + 2M = 4M + 1$. The elements in $(\varphi^n + (2M + 1)\mathds{1}) \oplus (\varphi^n + (2M + 1)\mathds{1}) = \varphi \oplus \varphi + (4M + 2)\mathds{1}$ are all at least $4M + 2$. Thus, the elements in the three boxes will be distinct. Additionally, the diagonal elements are distinct from all other elements. The total number of distinct elements is thus $R_{n+1} = 3R_n$.
\begin{center}
\begin{tabular}{C|C}
    \varphi^n \oplus \varphi^n & \varphi^n \oplus \varphi^n + (2M + 1)\mathds{1} \\ \hline
    \varphi^n \oplus \varphi^n + (2M + 1)\mathds{1} & \varphi^n \oplus \varphi^n + (4M + 2)\mathds{1}
\end{tabular}
\end{center}
As $R_0 = 1$, we get that $R_n = 3^n$. We had that $p + 1 = 2^n$, so $n = \ln(p+1) / \ln 2$. Hence,
\begin{align*}
    R = R_n &= 3^{\log_2(p+1)} = e^{\ln 3 / \ln 2 \ln(p+1)} = (p+1)^{\log_2 3}.
\end{align*}
When $p + 1$ is not a power of two, then we can round $p + 1$ up to the nearest power of two and take the first $p + 1$ exponents of the solution. This gives us that
\begin{equation*}
    R \leq (2(p+1))^{\log_2 3} = 3(p+1)^{\log_2 3}.
\end{equation*}
This shows us that the recovery threshold can be chosen to be $\Theta(p^{\log_2 3})$, which is better than quadratic as $\log_2 3 \approx 1.585$.

This construction is based on the $2 \times 2$ degree table of $\varphi = (0, 1)$. This means that a similar construction can be applied for powers of $k \in \{2, 3, \dots \}$ by considering an iterative definition based on a degree table of size $k \times k$. We see that $\varphi^0 \oplus \varphi^0$ contains all integers between $0$ and $2M$. From the above construction we see that if $\varphi^n \oplus \varphi^n$ contains all integers between $0$ and $M$, then $\varphi^{n+1} \oplus \varphi^{n+1}$ contains all integers between $0$ and $6M + 2$.

\subsection{Minimizing the Largest Element}

The reason to minimize the largest element is so that the decoding can be performed by regular polynomial interpolation of $h(x)$, which requires $\deg(h(x)) + 1$ evaluations. Then the evaluation points can be chosen freely, since the system will always be solvable. For this reason, it might be advantageous to consider this even if the degree table contains fewer terms than $\deg(h(x)) + 1$.

\emph{SDGMM scheme II:} Using a depth-first search algorithm we can search for an optimal solution $\varphi$ that produces a valid degree table with a minimal largest element. However, this does not give us a general solution. Below are some examples of the possible choices of $\varphi$ for different $p$ and $X = 1$. The general solution is not immediately clear from these examples. However, it turns out that the largest term grows quite fast with respect to $p$.
\begin{center}
\begin{tabular}{C|L}
    p & \varphi \\ \hline
    1 & (0, 1) \\
    2 & (0, 1, 3) \\
    3 & (0, 1, 3, 4) \\
    4 & (0, 1, 3, 7, 8) \\
    5 & (0, 1, 3, 4, 9, 10) \\
    6 & (0, 1, 3, 4, 9, 10, 12) \\
    7 & (0, 1, 3, 4, 9, 10, 12, 13) \\
    8 & (0, 1, 5, 6, 8, 13, 14, 17, 19) \\
    9 & (0, 1, 4, 6, 10, 15, 17, 18, 22, 23)
\end{tabular}
\end{center}

\subsection{Our Proposed Construction}

We propose the following construction for secure distributed Gram matrix multiplication over a finite field $\F_q$.

\begin{itemize}
    \item The input matrix $A \in \F_q^{t \times s}$ is paritioned to $p$ equal sized pieces horizontally.
    \item Given a valid list $\varphi \in \N^{p+1}$ of exponents, we define the polynomial
    \begin{equation*}
        f(x) = \sum_{j=1}^p A_j x^{\varphi_j} + R_1x^{\varphi_{p+1}},
    \end{equation*}
    where $R_1$ is a matrix of appropriate size with entries chosen uniformly at random from $\F_q$. Let $\alpha_1, \dots, \alpha_N \in \F_q^\times$ be distinct nonzero elements. Then worker $i \in [N]$ receives $\tilde{A}_i = f(\alpha_i)$ and computes the Gram matrix $\tilde{A}_i \tilde{A}_i^T$.
    \item The user interpolates the polynomial $f(x)f(x)^T$ from the responses and decodes the Gram matrix $AA^T$ by computing the sum of the coefficients of $x^{2\varphi_1}, \dots, x^{2\varphi_p}$.
\end{itemize}

We may choose this valid list of exponents $\varphi$ using either the SDGMM scheme I or II. The following theorem will show that the scheme is secure for $X = 1$. Larger values of $X$ are also possible with similar ideas, but are left for future research.

\begin{theorem}
Given a valid list of exponents $\varphi \in \N^{p+1}$, the associated SDGMM scheme is secure with $X = 1$. Furthermore, the scheme has recovery threshold $R = \lvert \mathcal{H} \rvert$, which is the number of elements in the degree table of $\varphi$. The upload and download costs of the scheme are $N ts / p$ and $Rt(t+1) / 2$.
\end{theorem}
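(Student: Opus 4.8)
The plan is to check the three assertions of the theorem — security for $X=1$, the recovery threshold, and the communication costs — essentially independently, since only the recovery-threshold claim needs genuine argument.

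\emph{Security.} First I would fix a single worker $i \in [N]$ and write $\tilde{A}_i = f(\alpha_i) = \sum_{j=1}^{p}\alpha_i^{\varphi_j}A_j + \alpha_i^{\varphi_{p+1}}R_1$. Since $\alpha_i \in \F_q^\times$, the scalar $\alpha_i^{\varphi_{p+1}}$ is a unit of $\F_q$, so the summand $\alpha_i^{\varphi_{p+1}}R_1$ is again uniformly distributed over matrices of the appropriate size and independent of $\bm{A}$. Adding to it the $\bm{A}$-measurable term $\sum_j\alpha_i^{\varphi_j}A_j$ leaves the distribution uniform and independent of $\bm{A}$, whence $I(\bm{A};\tilde{\bm{A}}_i)=0$. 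Equivalently, this is the criterion of \cite{makkonen2022general}: the ``random part'' of the generator matrix is the single row $(\alpha_1^{\varphi_{p+1}},\dots,\alpha_N^{\varphi_{p+1}})$, all of whose entries are nonzero, hence it generates an MDS code.

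\emph{Correctness and recovery threshold.} Setting $A_{p+1} := R_1$, worker $i$ returns $\tilde{A}_i\tilde{A}_i^{T} = h(\alpha_i)$ where
\[
h(x) = f(x)f(x)^{T} = \sum_{k,l\in[p+1]} A_kA_l^{T}\,x^{\varphi_k+\varphi_l} = \sum_{e\in\mathcal{H}} C_e\,x^{e}, \qquad C_e := \sum_{\varphi_k+\varphi_l = e} A_kA_l^{T},
\]
so $h$ is supported on the $\lvert\mathcal{H}\rvert$ exponents of the degree table, with matrix coefficients $C_e$. Recovering $(C_e)_{e\in\mathcal{H}}$ from the responses of any $\lvert\mathcal{H}\rvert$ workers is a generalized Vandermonde linear system in the $\alpha_i$. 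The crux of the proof is that, for $q$ sufficiently large, the $N$ points can be chosen so that every $\lvert\mathcal{H}\rvert\times\lvert\mathcal{H}\rvert$ generalized Vandermonde submatrix built from the column set $\mathcal{H}$ is invertible: each such determinant factors as an ordinary Vandermonde determinant times a Schur polynomial with positive integer coefficients, hence is a nonzero polynomial in the points, so the product of these finitely many determinants together with the pairwise differences of the points is a nonzero polynomial, which by a Schwartz--Zippel count admits a nonvanishing evaluation with pairwise distinct coordinates once $q$ is large enough. Hence any $\lvert\mathcal{H}\rvert$ responses let the user reconstruct all $C_e$; and by validity of $\varphi$ the exponent $2\varphi_j$ is hit only by $(k,l)=(j,j)$, so $C_{2\varphi_j}=A_jA_j^{T}$ and $\sum_{j=1}^{p}C_{2\varphi_j}=\sum_{j=1}^{p}A_jA_j^{T}=AA^{T}$, which is exactly the decoding step. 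The matching lower bound $R\geq\lvert\mathcal{H}\rvert$ is routine: $\lvert\mathcal{H}\rvert-1$ evaluations give a rank-deficient system for the $\lvert\mathcal{H}\rvert$ coefficients, so for a worst-case set of responsive workers the linear combination defining $AA^{T}$ is not determined.

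\emph{Costs, and the main obstacle.} Since $A\in\F_q^{t\times s}$ is split into $p$ horizontal blocks, each $A_j$ — and hence each encoding $\tilde{A}_i=f(\alpha_i)$ — is $t\times(s/p)$, i.e.\ $ts/p$ symbols, giving total upload $Nts/p$; each response $\tilde{A}_i\tilde{A}_i^{T}$ is a symmetric $t\times t$ matrix, so a worker transmits only its $t(t+1)/2$-symbol lower triangle, and the user downloads this from the $R$ workers it waits for, giving download $Rt(t+1)/2$. The only real work is the recovery-threshold step, and within it the main obstacle is turning the characteristic-zero nonvanishing of generalized Vandermonde (Schur) determinants into the existence of suitable evaluation points over the finite field $\F_q$ via a bound on $q$; the security argument and the cost count are immediate once the scheme is written out.
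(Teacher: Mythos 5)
Your proposal is correct and follows essentially the same route as the paper: security via the nonzero (hence MDS) single random row $(\alpha_1^{\varphi_{p+1}},\dots,\alpha_N^{\varphi_{p+1}})$ of the generator matrix (equivalently the one-time-pad view the paper mentions afterwards), decodability of the $\lvert\mathcal{H}\rvert$ coefficients of $h(x)$ by choosing evaluation points in a large enough field so that all generalized Vandermonde submatrices are invertible, and the same count of upload and download symbols. The only difference is that you spell out the Schur-polynomial/Schwartz--Zippel argument that the paper delegates to the cited GASP theorem, which is the same argument in expanded form.
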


\begin{proof}
By considering the generator matrix of the scheme, we notice that $F^{> p} = (\alpha_1^{\varphi_{p + 1}} \dots \alpha_N^{\varphi_{p + 1}})$. Hence, any $1 \times 1$ submatrix of $F^{>p}$ is invertible since $\alpha_i \neq 0$. By using the result from \cite[Theorem 1]{makkonen2022general}, the scheme is secure for $X = 1$.

By using a similar proof as \cite[Theorem 1]{d2020gasp}, we can choose the evaluation points $\alpha = (\alpha_1, \dots, \alpha_N) \in (\F_q^\times)^N$ from a large enough field such that the scheme is decodable for any subset of $R$ responses. The encoded pieces $\tilde{A}_i$ have size $t \times s / p$ and the responses have size $t \times t$. The responses are symmetric, so only the lower triangular part needs to be returned. The upload and download costs are then $N ts / p$ and $Rt(t+1) / 2$.
\end{proof}

The security of the scheme can also be seen by noticing that the encodings $\tilde{A}_i$ have the term $R_1\alpha_i^{\varphi_{p+1}}$, which is uniformly distributed as $\alpha_i \neq 0$. This means that the proposed scheme is secure, but does not provide any security against collusion.

\begin{remark}
We have a trivial upper bound on $\lvert \mathcal{H} \rvert$ as given by $R = \lvert \mathcal{H} \rvert \leq 2\varphi_{p+1} + 1$, since $\mathcal{H} \subseteq \{0, 1, \dots, 2\varphi_{p+1}\}$. In this case we may choose the evaluation points arbitrarily from $\F_q^\times$ as long as $q > N$. In this case the decoding can be performed by regular polynomial interpolation.
\end{remark}

\begin{remark}
Our scheme borrows some inspiration from the GASP scheme in \cite{d2020gasp, d2021degree} as it is obtained by optimizing the exponents in the degree table. However, we consider the inner product partitioning and a symmetric degree table, which has not been done previously in the literature to our knowledge.
\end{remark}

\subsection{Comparison}\label{sec:comparison}

\begin{figure}[t]
    \centering
    \includegraphics[width=0.7\columnwidth]{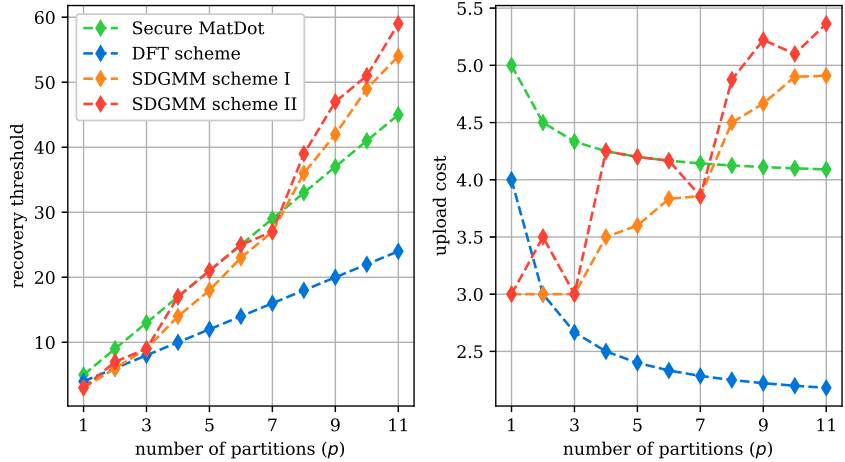}
    \caption{Comparision of the recovery threshold and upload cost of the proposed schemes, the secure MatDot scheme proposed in \cite{aliasgari2019distributed}, and the DFT scheme proposed in \cite{mital2022secure}. Our proposed schemes have lower upload cost compared to the secure MatDot scheme for small values of $p$. However, the DFT scheme has even lower upload cost, but is not able to tolerate stragglers.}
    \label{fig:plot}
\end{figure}

We wish to compare our proposed scheme to regular SDMM schemes in terms of the upload cost, due to the nature of our problem. It is natural to compare our scheme against the secure MatDot scheme, since it is able to achieve the lowest possible recovery threshold and upload cost according to the result in \cite{makkonen2022general}. We will also compare our scheme to the DFT scheme in \cite{mital2022secure}, which also uses the inner product partitioning. The GASP scheme is optimized to have a low download cost, while having a large upload cost, so we will not be comparing our proposed schemes to the GASP scheme. Gram matrix multiplication has also been studied in \cite{yu2019lagrange} with regard to linear regression using the Lagrange coded computation scheme. This scheme partitions the matrix similarly, but the workers need to compute significantly more compared to our proposed method. Hence, we will not be comparing our scheme to the one presented in \cite{yu2019lagrange}, since it would be difficult to fix the amount of computation that is done.

Figure \ref{fig:plot} shows the recovery threshold and upload cost as a function of $p$. We compare our schemes to the other schemes by fixing the amount of computation performed at the workers. As Gram matrix multiplication takes about half as many multiplications, we can consider the secure MatDot and the DFT schemes with $2p$ partitions and the SDGMM schemes with $p$ partitions. We see that our proposed schemes have a lower upload cost than the secure MatDot scheme for $p \leq 7$, which corresponds to $N \leq 27$ workers. For larger values of $p$ we see that our proposed schemes are worse than the secure MatDot scheme. The DFT scheme has a lower upload cost, but it cannot tolerate stragglers like the other schemes.

The schemes proposed in this paper consider the case of $X = 1$, \emph{i.e.}, the case where the workers do not collude. It is straightforward to extend these schemes to colluding with $X > 1$ workers, but it is not clear if the proposed schemes are optimal in this setting. We leave the case of $X > 1$ for future work.

\section{Analog Distributed Gram Matrix Multiplication}

The proposed scheme in the previous section works over a finite field $\F_q$, but for practical applications in data science, it would be preferable to have a scheme that works over the analog domain, \emph{i.e.}, over $\R$ or $\C$. The scheme can be adapted to work over the analog domain using similar methods that were done in \cite{makkonen2022analog} for regular SDMM schemes. These will give a tradeoff between the security and the numerical stability, but will achieve the same recovery thresholds and improvements over other schemes.

However, if security is not a concern, then the Gram matrix $AA^*$ can be computed by the following analog distributed Gram matrix multiplication (ADGMM) scheme. Here $A^*$ is the conjugate transpose of $A$. The matrix $A$ is partitioned to $p$ pieces horizontally. Define the polynomial
\begin{align*}
    f(x) = \sum_{j=1}^p A_j x^{j-1}.
\end{align*}
Let $\alpha_1, \dots, \alpha_N \in \C$ be distinct evaluation points such that $|\alpha_i| = 1$ for all $i \in [N]$. The worker $i$ receives $\tilde{A}_i = f(\alpha_i)$ and computes $\tilde{A}_i \tilde{A}_i^*$. These are evaluations of the function
\begin{align*}
    f(x)f(x)^* = \sum_{j=1}^p \sum_{j'=1}^p A_jA_{j'}^* x^{j-1} \overline{x}^{j-1} = \sum_{j=1}^p \sum_{j'=1}^p A_j A_{j'}^* x^{j - j'},
\end{align*}
since at the evaluation points $x = \alpha_i$, we have that $\overline{x} = \overline{\alpha}_i = \alpha_i^{-1}$. By multiplying this function with $x^{p-1}$, we get a polynomial of degree $2p - 2$, where the coefficient of $x^{p-1}$ is
\begin{align*}
    \sum_{j=1}^p A_j A_j^* = AA^*.
\end{align*}
This means that the Gram matrix $AA^*$ can be recovered from any $2p - 1$ responses. This scheme matches the MatDot scheme, but the matrix $A$ is encoded only once, which means that the upload cost is reduced to half compared to the regular MatDot scheme. Additionally, the workers can utilize more efficient algorithms to compute the Gram matrices $\tilde{A}_i \tilde{A}_i^*$.

We can also study the degree table for this scheme, where we consider the differences of the degrees instead of the sums, since $\overline{\alpha} = \alpha^{-1}$ for our chosen evaluation points. For example, for $p = 5$, we get the following degree table.
\begin{center}
\begin{tabular}{r|rrrrr}
      & 0 & 1 & 2 & 3 & 4 \\ \hline
    0 & \textbf{0} & -1 & -2 & -3 & -4 \\
    1 & 1 & \textbf{0} & -1 & -2 & -3 \\
    2 & 2 & 1 & \textbf{0} & -1 & -2 \\
    3 & 3 & 2 & 1 & \textbf{0} & -1 \\
    4 & 4 & 3 & 2 & 1 & \textbf{0}
\end{tabular}
\end{center}
Adding security to this scheme is not obvious, since any noise term $R_k x^{p+k - 1}$ would contribute an interference term $R_kR_k^*$ in the constant term, \emph{i.e.}, a zero on the diagonal by symmetry. In the secure MatDot scheme we are able to avoid this by not having a symmetric degree table.

\section*{Acknowledgements}

This work has been supported by the Academy of Finland under Grant No.\ 336005 and by the Vilho, Yrjö and Kalle Väisälä Foundation of the Finnish Academy of Science and Letters.

\bibliography{bib.bib}
\bibliographystyle{IEEEtran}

\end{document}